\documentclass{article}
\usepackage[preprint]{colm2026_conference}

\usepackage{microtype}
\usepackage{hyperref}
\usepackage{url}
\usepackage{booktabs}
\usepackage{amsfonts}
\usepackage{amsmath}
\usepackage{amssymb}
\usepackage{amsthm}
\newtheorem{theorem}{Theorem}
\newtheorem{corollary}[theorem]{Corollary}
\usepackage{graphicx}
\usepackage{xcolor}
\usepackage{nicefrac}
\usepackage{lineno}
\usepackage{multirow}
\usepackage{subcaption}

\definecolor{darkblue}{rgb}{0, 0, 0.5}
\hypersetup{colorlinks=true, citecolor=darkblue, linkcolor=darkblue, urlcolor=darkblue}

\newcommand{\zsep}{\Delta_z}

\title{Entanglement as Memory: Mechanistic Interpretability\\ of Quantum Language Models}

\author{Nathan Roll \\
Department of Linguistics \\
Stanford University \\
\texttt{nroll@stanford.edu}
}

\begin{document}

\maketitle

\begin{abstract}
Quantum language models have shown competitive performance on sequential tasks, yet whether trained quantum circuits exploit genuinely quantum resources---or merely embed classical computation in quantum hardware---remains unknown.
Prior work has evaluated these models through endpoint metrics alone, without examining the memory strategies they actually learn internally.
We introduce the first mechanistic interpretability study of quantum language models, combining causal gate ablation, entanglement tracking, and density-matrix interchange interventions on a controlled long-range dependency task.
We find that single-qubit models are exactly classically simulable and converge to the same geometric strategy as matched classical baselines, while two-qubit models with entangling gates learn a representationally distinct strategy that encodes context in inter-qubit entanglement---confirmed by three independent causal tests ($p < 0.0001$, $d = 0.89$). On real quantum hardware, only the classical geometric strategy survives device noise; the entanglement strategy degrades to chance.
These findings open mechanistic interpretability as a tool for the science of quantum language models and reveal a noise--expressivity tradeoff governing which learned strategies survive deployment.
\end{abstract}

\section{Introduction}
\label{sec:intro}

How recurrent models preserve contextual information across intervening material has been studied since Elman's simple recurrent networks \citep{elman1990finding}. Classical architectures address this through gating \citep{hochreiter1997long}, and the strategies they learn have been probed in detail: agreement dependencies in LSTMs \citep{linzen2016assessing, gulordava2018colorless}, structured recurrences in state space models \citep{gu2022efficiently}. The recurring lesson is that mechanistic understanding of \emph{how} a model solves the memory problem reveals more than benchmark accuracy alone.

Quantum language models, recurrent networks whose hidden states are quantum mechanical, change the picture. Their state spaces grow exponentially with system size: two entangled qubits occupy a 15-dimensional space that no product of classical vectors can represent. Quantum recurrent architectures have shown faster convergence on sequential benchmarks \citep{chen2022qlstm, li2025qgrnn, takaki2021learning}, and compositional quantum NLP frameworks have been run on real hardware \citep{lorenz2023qnlp, kartsaklis2021lambeq}. But no one has looked at the \emph{memory mechanisms} these models learn, the question that has been most productive in classical language modeling \citep{linzen2016assessing}.

We ask the same question for quantum models: \emph{do trained quantum language models actually use quantum resources for memory, or do they merely perform classical computation on quantum hardware?} Expressivity analyses characterize what quantum circuits \emph{could} compute \citep{Abbas2021quantum, Cerezo2021variational}, kernel methods connect circuit design to feature spaces \citep{Schuld2019quantum, Havlicek2019supervised}, and rigorous speed-ups exist for specific problems \citep{Liu2021rigorous}, but none of this tells us what a circuit \emph{does} compute after training. The situation parallels LSTMs: their expressivity was established long before anyone probed what trained LSTMs actually encode \citep{linzen2016assessing}. For classical models, mechanistic interpretability has since identified circuits \citep{olah2020zoom, elhage2021mathematical}, induction heads \citep{olsson2022context}, superposition \citep{elhage2022superposition}, dictionary-learning-based decompositions \citep{bricken2023monosemanticity}, and causal structure \citep{vig2020causal, geiger2021causal, wang2023interpretability, conmy2023acdc, nanda2023progress}. For quantum models, nothing comparable exists.

We address this directly. We train minimal quantum and classical recurrent networks on the \emph{Parity-Switch Grammar}, a synthetic task requiring context identity to be maintained across distractor tokens, and perform a mechanistic analysis of the learned strategies. By comparing single-qubit models (which cannot entangle) against two-qubit models (with and without CNOT gates) and parameter-matched classical baselines, we can delineate the classical--quantum boundary.

Our investigation yields four principal findings:
\begin{enumerate}
\item[\textbf{C1.}] Two-qubit QRNNs with CNOT gates learn an \emph{entanglement-based memory strategy} that is representationally distinct from any single-qubit mechanism. Context identity is encoded in inter-qubit entanglement entropy, and ablating the CNOT gate causally destroys this encoding, forcing reversion to the classical geometric strategy.
\item[\textbf{C2.}] Single-qubit QRNNs are \emph{exactly classically simulable} via the SU(2)$\to$SO(3) double cover. Both quantum and classical baselines converge to an identical geometric strategy---hemisphere preservation on the Bloch sphere---establishing a precise classical reference point.
\item[\textbf{C3.}] A \emph{shared-parameter tradeoff theorem} proves that parameter sharing between context and distractor gates makes simultaneous context encoding and distractor invariance impossible, explaining why minimal architectures fail.
\item[\textbf{C4.}] Hardware validation on IBM Eagle-generation processors reveals a \emph{noise--expressivity tradeoff}: the geometric strategy survives perfectly while entanglement-based memory degrades to chance under two-qubit gate noise.
\end{enumerate}

\begin{figure}[t]
\centering
\includegraphics[width=\linewidth]{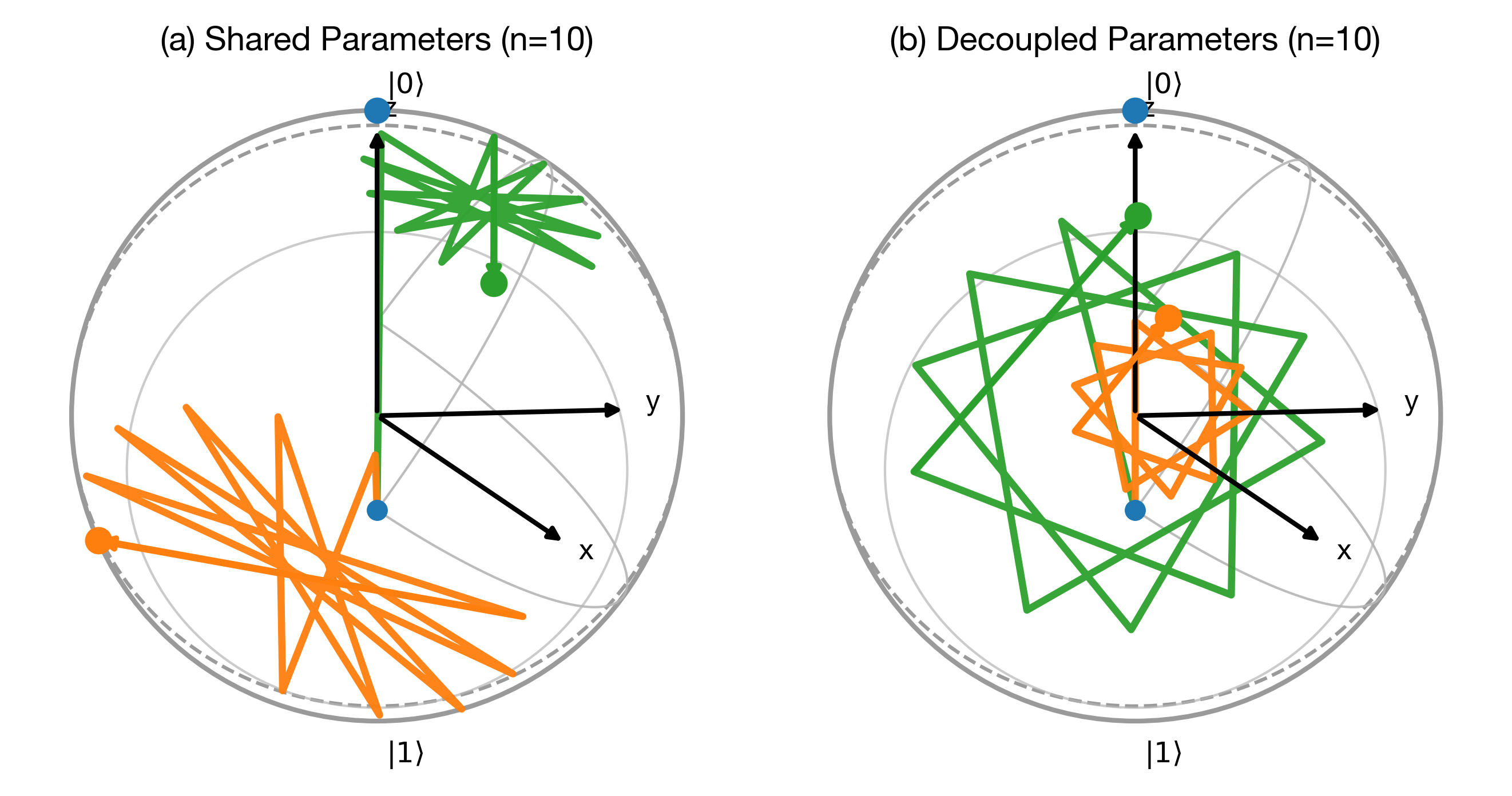}
\caption{\textbf{Hemisphere separation distinguishes successful from failing quantum memory.} Bloch sphere trajectories for 10-distractor sequences. \emph{Left:} Shared-parameter model (MinimalQLM). Contexts A and B overlap after distractor tokens, causing misclassification (76\% accuracy). \emph{Right:} Decoupled model (DecoupledQLM). Contexts remain in opposite hemispheres throughout, achieving 100\% generalization. A classical SO(3) baseline produces identical trajectories, confirming this mechanism is not inherently quantum.}
\label{fig:teaser}
\end{figure}

\section{Related work}
\label{sec:related}

\paragraph{Memory in recurrent language models.}
\citet{linzen2016assessing} probed LSTMs for agreement using controlled syntactic tasks; \citet{gulordava2018colorless} showed RNN LMs track grammatical features even in implausible contexts; \citet{marvin2018targeted} extended this across construction types; state space models \citep{gu2022efficiently} approach the same problem through structured linear recurrences. The common thread is that controlled experiments probing \emph{how} models encode information yield more than benchmark accuracy alone. We apply this approach to quantum recurrent models.

\paragraph{Quantum recurrent architectures.}
The Quantum LSTM \citep{chen2022qlstm} augmented classical LSTM cells with variational quantum circuits, demonstrating competitive or faster convergence on certain time-series benchmarks. Subsequent work introduced gated quantum RNNs \citep{li2025qgrnn}, variational QRNNs for temporal data \citep{takaki2021learning}, and hybrid architectures \citep{xu2025hybrid}. Compositional quantum NLP (DisCoCat) takes a fundamentally different approach, using category-theoretic compositional semantics \citep{coecke2010mathematical} to derive quantum circuit representations of linguistic structure, implemented in the lambeq framework \citep{kartsaklis2021lambeq} and validated on superconducting hardware \citep{lorenz2023qnlp}. None of these examine what trained circuits actually compute internally. We dissect the learned strategies rather than comparing convergence curves.

\paragraph{Classical--quantum equivalence and dequantization.}
Data re-uploading renders single-qubit circuits universal classifiers \citep{perez2020data}, with provable finite VC dimension \citep{mauser2025experimental}. However, single-qubit operations are classically simulable \citep{yu2022power}---a consequence of the standard SU(2)$\to$SO(3) double cover, since every single-qubit unitary maps to a 3D rotation. Our dequantization result (Section~\ref{sec:theory}) confirms this constructively and leverages it as a controlled baseline. Whether entanglement is necessary for quantum computational advantage remains actively debated \citep{preskill2012quantum, biham2004quantum}; our entanglement ablation experiments provide direct empirical evidence within the variational circuit setting.

\paragraph{Mechanistic interpretability.}
Classical mechanistic interpretability has found modular structure in neural networks: circuits \citep{olah2020zoom, elhage2021mathematical}, induction heads \citep{olsson2022context}, superposition and its decomposition via sparse autoencoders \citep{elhage2022superposition, bricken2023monosemanticity}, automated circuit discovery \citep{wang2023interpretability, conmy2023acdc}, and causal abstraction \citep{geiger2021causal}. Causal mediation analysis \citep{vig2020causal} brought component-level interventions to NLP; progress measures linked mechanistic structure to training dynamics \citep{nanda2023progress}. We adapt these ideas to quantum systems. At 1--2 qubits, the full internal state is geometrically visible on the Bloch sphere, so the interpretability problem is tractable in ways it is not for larger models.

\paragraph{Quantum advantage and noise.}
Rigorous quantum speed-ups for supervised learning have been established under specific problem constructions \citep{Liu2021rigorous}, and quantum kernel methods provide a formal framework connecting circuit design to feature-space geometry \citep{Havlicek2019supervised, Schuld2019quantum}. The power of quantum neural networks depends critically on architecture and encoding \citep{Abbas2021quantum, Jerbi2023beyond}. Barren plateaus present fundamental trainability challenges for deep variational circuits \citep{mcclean2018barren, sim2019expressibility}, and in the current NISQ era \citep{preskill2018quantum}, error mitigation is essential for extracting useful results from noisy hardware \citep{temme2017error, kandala2019error}. Our hardware experiments test whether the strategies that circuits \emph{learn} in simulation survive on real quantum processors.

\section{Approach}
\label{sec:approach}

\subsection{Task design: Parity-Switch Grammar}
\label{sec:task}

Following the controlled-task tradition in NLP \citep{linzen2016assessing, gulordava2018colorless}, we designed a minimal long-range dependency task that isolates selective memory: retaining task-relevant information while processing irrelevant intervening tokens. A sequence begins with a context token ($A$ or $B$), followed by $n$ distractor tokens ($D$), and the model must predict the context identity after the full sequence. This is the simplest instance of the selective-memory problem behind agreement dependencies and other long-range phenomena. The simplicity is deliberate. At 1--2 qubits, the quantum state space is small enough for complete mechanistic characterization, the kind of exhaustive internal analysis that has proven productive in toy-model studies \citep{nanda2023progress}.

\begin{table}[t]
\small
\centering
\begin{tabular}{lll}
\toprule
\textbf{Token} & \textbf{Encoding ($x$)} & \textbf{Role} \\
\midrule
$A$ & $+\pi/3$ & Context $\to$ predict 0 \\
$B$ & $-\pi/3$ & Context $\to$ predict 1 \\
$D$ & $0.0$ & Distractor \\
\bottomrule
\end{tabular}
\caption{\textbf{Parity-Switch Grammar token encodings.} The task reduces long-range dependency to its minimal form: maintain context identity across arbitrarily many distractors.}
\label{tab:encoding}
\end{table}

Training uses 16 sequences (balanced A/B, 0--3 distractors). Generalization is evaluated on 200 sequences with 0--20 distractors---lengths never observed during training---testing whether learned memory mechanisms extrapolate beyond the training distribution.

\subsection{Model architecture}
\label{sec:models}

We constructed six models spanning the classical--quantum boundary, all implemented from scratch without external model libraries:

\begin{table}[t]
\small
\centering
\begin{tabular}{llcl}
\toprule
\textbf{Model} & \textbf{Type} & \textbf{Params} & \textbf{Distinguishing property} \\
\midrule
MinimalQLM & Quantum (1Q) & 3 & Shared context/distractor gate \\
DecoupledQLM & Quantum (1Q) & 6 & Separate context/distractor gates \\
TwoQubitQLM & Quantum (2Q) & 24 & CNOT entangling gate \\
ClassicalSO(3) & Classical & 3/6 & SO(3) rotation matrices \\
ClassicalRNN4 & Classical & 24 & tanh-RNN, 4D hidden state \\
\bottomrule
\end{tabular}
\caption{\textbf{Model zoo spanning the classical--quantum boundary.} Parameter counts are exactly matched between quantum and classical counterparts to control for model capacity.}
\label{tab:models}
\end{table}

Each quantum model applies a parameterized unitary at each timestep:
\begin{equation}
U(\theta, x_t) = R_z(\theta_3) \cdot R_y(\theta_2 + x_t) \cdot R_z(\theta_1)
\label{eq:unitary}
\end{equation}
where $\theta_1, \theta_2, \theta_3$ are learned parameters and $x_t$ is the token encoding. This $R_z$-$R_y$-$R_z$ decomposition is universal for SU(2). The classical SO(3) baselines replace quantum gates with equivalent 3$\times$3 rotation matrices operating on unit vectors in $\mathbb{R}^3$, exploiting the isomorphism between the Bloch sphere and $\mathbb{R}^3$.

The two-qubit model applies Eq.~\ref{eq:unitary} independently to each qubit, then optionally applies a CNOT gate (qubit~0 $\to$ qubit~1). The parameter count is $3$ (rotation angles) $\times$ $2$ (qubits) $\times$ $2$ (roles: context/distractor) $\times$ $2$ (gate layers in the $R_z$-$R_y$-$R_z$ decomposition share $\theta_1,\theta_3$ while $\theta_2$ is role-specific) $= 24$ parameters, exactly matching the classical RNN baselines.

\subsection{Theoretical foundations}
\label{sec:theory}

\begin{theorem}[Shared-Parameter Tradeoff]
\label{thm:tradeoff}
In the shared-parameter architecture where $U(\theta, x) = R_z(\theta_3)R_y(\theta_2 + x)R_z(\theta_1)$ processes both context and distractor tokens: context encoding requires $\theta_2 \neq 0$ (so distinct inputs produce distinct Z-coordinates), while distractor invariance requires $\theta_2 = 0$ (so $R_y(\theta_2)$ preserves the Z-axis). These constraints are mutually exclusive. Decoupling into separate parameter sets for each token role resolves this tradeoff.
\end{theorem}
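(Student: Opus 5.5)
I will prove the statement by working in the Bloch-vector (SO(3)) picture and making precise the two informal requirements. The key facts are that $R_z(\phi)$ acts on the Bloch sphere as a rotation about the $z$-axis, so it preserves the $z$-coordinate, while $R_y(\alpha)$ rotates in the $x$--$z$ plane and sends $(x,y,z)\mapsto(x\cos\alpha+z\sin\alpha,\;y,\;-x\sin\alpha+z\cos\alpha)$. Readout is by $\langle Z\rangle$, so the only quantity that matters for classification is the $z$-coordinate after the full sequence. Since every $R_z$ factor leaves $z$ unchanged, the $z$-dynamics of one step $U(\theta,x)$ depend only on $\theta_2+x$ and on the azimuthal phase that $\theta_1$ (and the previous $\theta_3$) induce. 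I will treat the initial state as $|0\rangle$, i.e.\ the north pole.

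\emph{Context encoding.} Starting from the north pole, a context step yields $z=\cos(\theta_2+x)$, because the north pole is fixed by $R_z(\theta_1)$. Context encoding means $\cos(\theta_2+\pi/3)\neq\cos(\theta_2-\pi/3)$. The difference is $-2\sin\theta_2\sin(\pi/3)$, which vanishes iff $\theta_2\in\pi\mathbb{Z}$. So encoding forces $\theta_2\not\equiv 0 \pmod \pi$, and in particular $\theta_2\neq 0$.

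\emph{Distractor invariance.} A distractor applies $U(\theta,0)=R_z(\theta_3)R_y(\theta_2)R_z(\theta_1)$. I will formalize invariance as requiring that this map preserve the $z$-coordinate of every state, or at least of the two context states for every number of repetitions $n$. For the universal version, take a point on the equator with arbitrary azimuth: after $R_z(\theta_1)$ it can be made to equal $(1,0,0)$, and $R_y(\theta_2)$ then gives $z=-\sin\theta_2$. Preserving $z=0$ therefore forces $\sin\theta_2=0$. Applying the same map to the north pole gives $z=\cos\theta_2$, which forces $\cos\theta_2=1$. Together these give $\theta_2\equiv0\pmod{2\pi}$, so $U(\theta,0)$ is a pure $z$-rotation.

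\emph{Combining.} Because the context and distractor steps share $\theta_2$, the first condition requires $\theta_2\not\equiv0$ and the second requires $\theta_2\equiv0$, a contradiction. For decoupling, I will exhibit parameters that satisfy both: take $\theta_2^{\mathrm{ctx}}=\pi/6$ (any value off $\pi\mathbb{Z}$ works) and $\theta_2^{\mathrm{dist}}=0$. Then distractors act as $z$-rotations, the $z$-values $\cos(\pi/2)=0$ and $\cos(-\pi/6)=\sqrt3/2$ persist for every $n$, and the two contexts remain separated. Choosing instead $\theta_2^{\mathrm{ctx}}=\pi/2$ puts them in opposite hemispheres, since $\cos(5\pi/6)$ and $\cos(\pi/6)$ have opposite signs.

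\emph{Main obstacle.} The hard step is the definition of distractor invariance. If it only requires preserving the two specific context states, a nonzero $\theta_2$ might preserve $z$ for those points by accident, for instance when a point lies on the rotation axis of $U(\theta,0)$. I would rule this out by demanding invariance for all $n\le 20$ and for both contexts simultaneously. A generic SO(3) rotation not about $z$ changes $z$ unless the point lies on its axis, and the two context states generically cannot both be on that axis. If that route fails, I would state the theorem under the universal, state-independent invariance hypothesis, which is what the phrase ``preserves the $Z$-axis'' in the statement suggests.
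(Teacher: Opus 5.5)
Your proof is correct. Its distractor-invariance half is the same as the paper's: both require $U(\theta,0)$ to fix the $z$-coordinate of every Bloch vector and read off $\sin\theta_2=0$, $\cos\theta_2=1$.

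The real difference is in the encoding half. The paper asks only that $z'(x_A)\neq z'(x_B)$ for \emph{some} input Bloch vector. Under that formalization the exclusivity claim is actually false: at $\theta_2=0$ the term $-\sin(x)(\cdots)$ still separates $\pm\pi/3$ on non-polar inputs. The paper concedes this and falls back on a qualitative ``precision--expressivity'' tradeoff. It also attributes the $\theta_2\in\pi\mathbb{Z}$ degeneracy to the sine term, whereas your identity $\cos(\theta_2+\pi/3)-\cos(\theta_2-\pi/3)=-2\sin\theta_2\sin(\pi/3)$ correctly places it in the cosine term.

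You instead fix the input to $|0\rangle$, which is where the context token acts since it comes first. This makes $R_z(\theta_1)$ inert and gives the sharp condition $\theta_2\notin\pi\mathbb{Z}$, so the two constraints really do contradict each other and the theorem holds as stated. One more line would make this choice of formalization airtight. For $\theta_2\in\pi\mathbb{Z}$, every shared distractor maps $z\mapsto\pm z$ with the same sign for both contexts. Hence $z_A=z_B$ persists to the $Z$ readout at every sequence length, and the horizontal difference that the paper's general-state analysis detects is never measured.

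So the paper's route is state-independent but ends in a heuristic. Yours is tied to the actual circuit and yields a genuine impossibility. It also gives explicit decoupled parameters with the correct requirement $\theta_2^{\mathrm{ctx}}\notin\pi\mathbb{Z}$; the paper's $\theta_2^{\mathrm{ctx}}\neq 0$ would admit the failing value $\pi$.

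Your fallback paragraph is not load-bearing, since the paper also uses universal invariance. If you do pursue the orbit-based version, the word ``generically'' must be replaced by an argument covering finite-order rotations. The only off-axis escape is a $\pi$-rotation, which (once $\theta_2\notin\pi\mathbb{Z}$) occurs exactly when $\theta_1+\theta_3\equiv\pi \pmod{2\pi}$. In that case one distractor sends both context states to height $\cos(\pi/3)=1/2$. Invariance then forces $\cos(\theta_2+\pi/3)=\cos(\theta_2-\pi/3)$, contradicting encoding again.
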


\textit{Proof sketch.} The Z-coordinate after applying $U$ to state $|\psi\rangle$ with Bloch vector $(x_b, y_b, z_b)$ depends on $\theta_2 + x$ through $\cos(\theta_2 + x)$. For two inputs $x_A \neq x_B$, the resulting Z-coordinates differ only if $\theta_2 + x_A \neq \pm(\theta_2 + x_B)$, which requires $\theta_2 \neq 0$ generically. For distractor invariance, the distractor gate $U(\theta, 0)$ must fix the Z-coordinate for all initial states, forcing $\theta_2 = 0$. Since parameters are shared, both constraints apply to the same $\theta_2$. Full proof in Appendix~\ref{app:proofs}. \qed

\begin{corollary}[Dequantization]
\label{cor:dequant}
Single-qubit QRNNs are exactly classically simulable. The double cover $\pi: \mathrm{SU}(2) \to \mathrm{SO}(3)$ given by $\pi(U)_{ij} = \frac{1}{2}\mathrm{tr}(\sigma_i U \sigma_j U^\dagger)$ is a 2-to-1 homomorphism, so every single-qubit quantum computation can be reproduced by tracking a 3D unit vector under SO(3) rotations.
\end{corollary}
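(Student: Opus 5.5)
The plan is to make the double cover explicit and then push the single-qubit recurrence through it, step by step. Write $\rho = \tfrac12(I + \vec r\cdot\vec\sigma)$ for the hidden state, where $\vec r\in\mathbb{R}^3$ is its Bloch vector. For $U\in\mathrm{SU}(2)$, conjugation $\rho\mapsto U\rho U^\dagger$ acts linearly on $\vec r$. Expanding $U\sigma_j U^\dagger = \sum_i R_{ij}\sigma_i$ and taking the trace against $\sigma_i$ with $\mathrm{tr}(\sigma_i\sigma_k)=2\delta_{ik}$ gives exactly $R_{ij}=\tfrac12\mathrm{tr}(\sigma_i U\sigma_j U^\dagger)$, which is the formula for $\pi(U)$ in the statement. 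The algebraic facts about $\pi$ are then checked as follows.
\begin{itemize}
\item $R$ is real, because $U\sigma_jU^\dagger$ is Hermitian.
\item $R$ is orthogonal, because conjugation preserves $\mathrm{tr}(AB)$ on traceless Hermitian matrices, and this trace form is twice the Euclidean inner product on $\vec r$.
\item $\det R=1$, by connectedness of $\mathrm{SU}(2)$ together with $\pi(I)=I$.
\item $\pi$ is a homomorphism, $\pi(UV)=\pi(U)\pi(V)$, directly from composing conjugations.
\end{itemize}

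For the 2-to-1 claim, the kernel consists of those $U$ commuting with all three Pauli matrices. By Schur's lemma (the Paulis generate $M_2(\mathbb{C})$) these are scalars, and in $\mathrm{SU}(2)$ that leaves only $\{\pm I\}$. Surjectivity follows from the paper's gate set. $\pi(R_z(\alpha))$ and $\pi(R_y(\beta))$ are the standard rotations about $z$ and $y$ by angles $\alpha,\beta$, which one verifies by a short computation on the generators. Euler's $ZYZ$ decomposition then shows every element of $\mathrm{SO}(3)$ is hit, consistent with the remark after Eq.~\ref{eq:unitary} that $R_zR_yR_z$ is universal.

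The simulation claim follows by induction on the sequence length. After tokens $x_1,\dots,x_t$ the quantum state is $U_t\cdots U_1|\psi_0\rangle$. By the homomorphism property, its Bloch vector is $\pi(U_t)\cdots\pi(U_1)\vec r_0$, and each factor is $\pi(U(\theta,x_t)) = R^{(z)}_{\theta_3}R^{(y)}_{\theta_2+x_t}R^{(z)}_{\theta_1}$, which is computable in $O(1)$ real arithmetic per step. The sign ambiguity $\pm U$ is harmless, since it is a global phase and disappears under conjugation.

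The readout is any Pauli expectation, in particular $\langle Z\rangle = r_z$, or the measurement probability $(1\pm r_z)/2$. This is a linear function of $\vec r$, so every output distribution of the QRNN, including the training loss, is reproduced exactly by the ClassicalSO(3) model with matched parameters. The same argument covers MinimalQLM and DecoupledQLM: the latter just uses two parameter triples selected by token role, and each triple maps under $\pi$ to its own rotation.

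The main obstacle is not conceptual but one of scope. The corollary must be stated for pure unitary single-qubit evolution with Pauli-observable readout. If the architecture included intermediate measurements or noise channels, these would need separate treatment; they still act affinely on the Bloch ball, so exact classical simulation would persist via $3\times 3$ affine maps. I would state this extension explicitly rather than rely on $\mathrm{SO}(3)$ alone. The remaining step that needs careful bookkeeping is the explicit check that $\pi(R_y(\beta))$ rotates by $\beta$ rather than $\beta/2$, given the half-angle convention $R_y(\beta)=e^{-i\beta\sigma_y/2}$. This convention fixes the matching of parameters between the quantum and classical models.
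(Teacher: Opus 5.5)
Your proposal is correct and takes essentially the paper's route. The paper gives no separate proof: it asserts that $\pi$ is a 2-to-1 homomorphism (a standard fact it attributes to prior work and corroborates with the exact match of the ClassicalSO(3) baseline) and concludes that the Bloch vector can be tracked under the rotations $\pi(U_t)\cdots\pi(U_1)$. You supply what it leaves implicit: that $\pi(U)$ is real and orthogonal with $\det R=1$, the kernel $\{\pm I\}$, surjectivity via the $ZYZ$ Euler decomposition, and that the readout $(1\pm r_z)/2$ is affine in $\vec r$.
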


This corollary establishes that any quantum advantage in our setting must arise from multi-qubit effects---specifically, entanglement.

\subsection{Mechanistic interpretability framework}
\label{sec:interp}

Quantum systems of 1--2 qubits admit complete state characterization, enabling interpretability tools not available for larger models:

\textbf{Bloch sphere probes.} We extract the Bloch coordinates $(x, y, z)$ from the density matrix at each timestep, providing a geometric visualization of the model's internal state trajectory. For single-qubit systems, this captures the complete quantum state.

\textbf{Weight sensitivity analysis.} We intervene on individual learned parameters (zero, negate, or sweep) and measure the effect on model behavior. While conceptually related to activation patching \citep{vig2020causal}, this operates on weights rather than activations and is more precisely characterized as sensitivity analysis (see Appendix~\ref{app:patching}).

\textbf{CNOT ablation.} We selectively remove the entangling gate from specific timesteps. That the CNOT is necessary for entanglement is trivially true by physics; the informative question is whether the \emph{learned computational strategy} is causally disrupted---specifically, whether the model reverts to a qualitatively different mechanism upon ablation.

\textbf{Entanglement tracking.} For two-qubit models, we compute the von Neumann entanglement entropy $S(\rho_0) = -\mathrm{tr}(\rho_0 \log_2 \rho_0)$ from the reduced density matrix $\rho_0 = \mathrm{tr}_1(|\psi\rangle\langle\psi|)$ at each timestep, quantifying the degree of quantum correlation between subsystems.

All models are trained using the SPSA optimizer \citep{spall1998implementation} with 2 function evaluations per gradient estimate, decay exponents $\alpha = 0.602$ and $\gamma = 0.101$ (theoretically optimal), and the default preset ($a{=}0.2$, $c{=}0.1$, $A{=}10$). Single-qubit models train for 200 steps; two-qubit models for 300 steps. Multi-seed results use 10 seeds $\times$ 3 restarts to account for SPSA stochasticity.

\section{Experiments}
\label{sec:experiments}

\subsection{Main results}

Table~\ref{tab:results} presents the central empirical findings. We define the Z-separation $\zsep$ as the absolute difference in Bloch-sphere Z-coordinates between contexts A and B after the final timestep (maximum possible value 2.0).

\begin{table}[t]
\small
\centering
\begin{tabular}{lcccl}
\toprule
\textbf{Model} & \textbf{Sim.\ Acc.} & \textbf{HW Acc.} & \textbf{$\zsep$} & \textbf{Mechanism} \\
\midrule
MinimalQLM (1Q shared) & 76\% & -- & 0.36 & Periodic recurrence \\
DecoupledQLM (1Q) & 100\% & \textbf{100\%} & 1.36 & Z-preservation \\
ClassicalSO(3) (matched) & 100\% & -- & 1.36 & Z-preservation \\
\midrule
2Q + CNOT (30 seeds) & 96.3{\scriptsize$\pm$9.1}\% & 39{\scriptsize$\pm$7}\% & -- & Entanglement \\
2Q + SWAP (10 seeds) & 97.1{\scriptsize$\pm$5.7}\% & -- & -- & Z-preservation \\
2Q no CNOT (10 seeds) & 100{\scriptsize$\pm$0}\% & 41{\scriptsize$\pm$5}\% & -- & Z-preservation \\
ClassicalRNN4 (Adam) & 100\% & -- & -- & Z-preservation \\
ClassicalRNN4 (SPSA) & 50\% & -- & -- & Fails (optimizer) \\
\bottomrule
\end{tabular}
\caption{\textbf{Entanglement-based memory emerges exclusively with CNOT gates; the task is trivially classical.} HW = IBM hardware (ibm\_fez / ibm\_marrakesh, 156 qubits, 1000 shots/circuit). $\zsep$ = Bloch Z-coordinate separation (max 2.0). 2Q+CNOT results: 30 seeds $\times$ 3 restarts. The classical RNN solves the task with Adam but fails under SPSA, confirming that the SPSA comparison is optimizer-confounded. The SWAP control (non-entangling inter-qubit coupling) achieves high accuracy via Z-preservation, isolating entanglement as the resource enabling the distinct strategy.}
\label{tab:results}
\end{table}

\subsection{Single-qubit networks learn a classical geometric strategy}
\label{sec:finding1}

The decoupled single-qubit model places Context A in the northern hemisphere and Context B in the southern hemisphere of the Bloch sphere (Figure~\ref{fig:teaser}, right). The distractor gate rotates the state around the Z-axis without altering its latitude, so hemisphere membership---and therefore context identity---is preserved across arbitrarily many distractors. The Z-separation $\zsep$ between contexts \emph{grows} slightly with sequence length (1.50 at 0 distractors, 1.71 at 100), because distractor rotations push both trajectories toward the poles.

The shared-parameter model cannot implement this strategy. Constrained by Theorem~\ref{thm:tradeoff}, it achieves only 76\% accuracy, exactly as predicted: the single shared $\theta_2$ cannot simultaneously encode context differences and maintain distractor invariance.

Figure~\ref{fig:zpres} demonstrates the robustness of this mechanism under extreme stress testing up to 100 distractors.

\begin{figure}[t]
\centering
\includegraphics[width=0.72\linewidth]{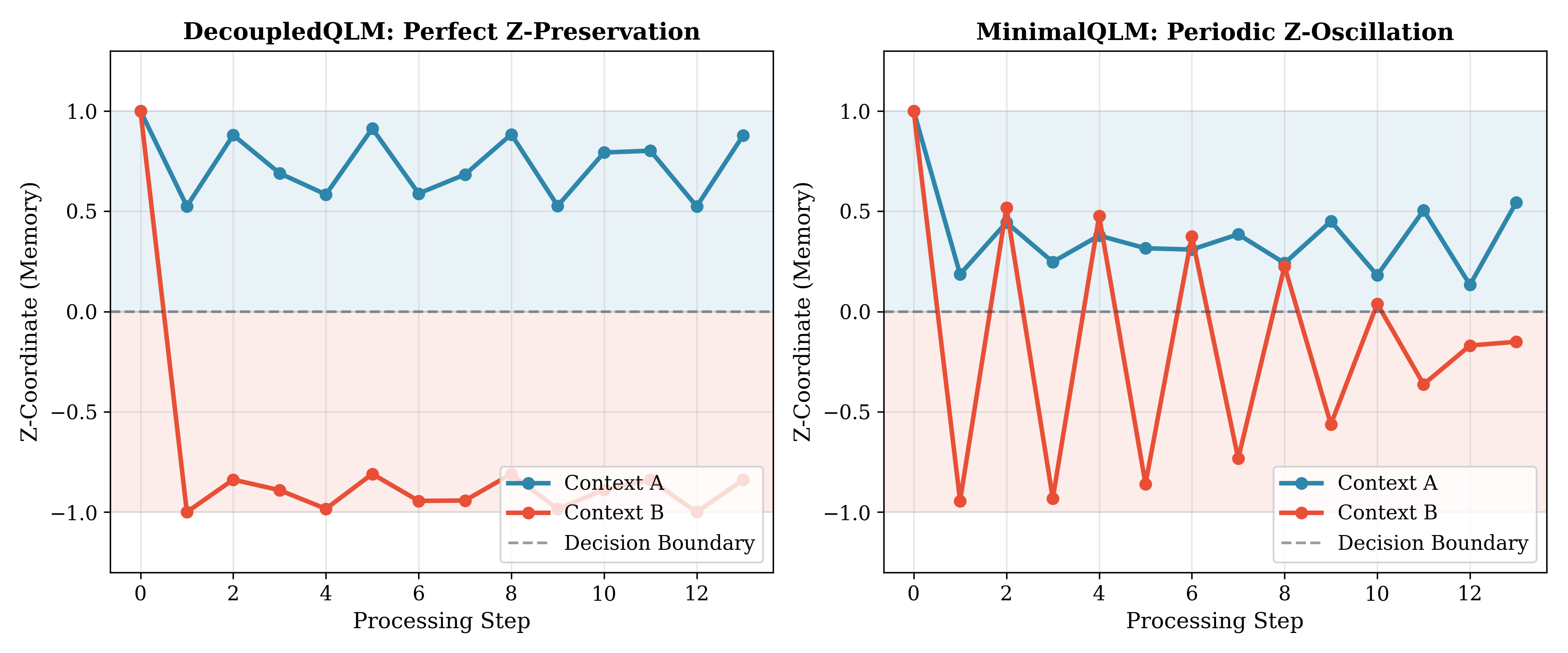}
\caption{\textbf{Z-coordinate preservation is robust across 100 distractors.} Bloch-sphere Z-coordinate for Context A and Context B across 0--100 distractors (DecoupledQLM). Hemisphere separation grows slightly with length because distractor rotations push states toward the poles. The shared model loses separation after $\sim$5 distractors.}
\label{fig:zpres}
\end{figure}

The classical SO(3) baseline reproduces the quantum model's accuracy and mechanism \emph{exactly}. This validates the dequantization corollary: at one qubit, the quantum and classical models are mathematically identical. The classical model trains 13$\times$ faster (0.06s vs 0.76s) since it avoids statevector simulation overhead. At the single-qubit level, quantum confers no advantage---but this equivalence provides a rigorous baseline against which multi-qubit effects can be measured.

\subsection{Two-qubit networks learn an entanglement-based memory strategy}
\label{sec:finding2}

Two-qubit models with CNOT gates do something different. Instead of geometric separation on the Bloch sphere, these models encode context identity in the \emph{entanglement structure} between qubits. Figure~\ref{fig:entanglement} shows the mechanism: after the context token, the von Neumann entanglement entropy $S(\rho_0)$ diverges between Context A and Context B sequences and stays diverged through all subsequent distractors. This is a memory channel that has no analogue in the single-qubit strategy space.

We verify this mechanistically through three lines of evidence. First, classifying all 30 seeds (3 restarts each) by \emph{entanglement entropy divergence} between contexts (threshold: mean $|S_A - S_B| > 0.05$) rather than by accuracy, all 30/30 seeds exhibit context-dependent entanglement structure. CNOT ablation across these seeds yields $\Delta = 18.4 \pm 20.7$ pp ($t(29) = 4.88$, $p < 0.0001$, Cohen's $d = 0.89$; 95\% CI: $[11.0, 25.8]$ pp). Second, replacing the CNOT with a SWAP gate (which couples qubits without creating entanglement) yields $97.1 \pm 5.7$\% accuracy via Z-preservation with zero entropy divergence, confirming that entanglement specifically---not inter-qubit coupling---enables the distinct strategy. Third, \emph{density-matrix interchange interventions} (a quantum analogue of activation patching) confirm that context identity is carried in the full quantum state: splicing the statevector from a context-A computation into a context-B computation at any timestep produces donor-context predictions at 100\% of timesteps tested (Appendix~\ref{app:interchange}). Full details of all three analyses are in Appendices~\ref{app:stratified}--\ref{app:interchange}.

The 24-parameter classical tanh-RNN achieves only 50\% under SPSA, but reaches $100\%$ on 19/20 seeds when trained with Adam and exact finite-difference gradients (Appendix~\ref{app:classical}). The task is therefore trivially solvable by classical means. We do \emph{not} claim quantum computational advantage; the contribution is mechanistic. The entanglement-based strategy is representationally distinct from anything available without entangling gates, regardless of whether classical architectures can solve the same task through different means.

\begin{figure}[t]
\centering
\includegraphics[width=0.75\linewidth]{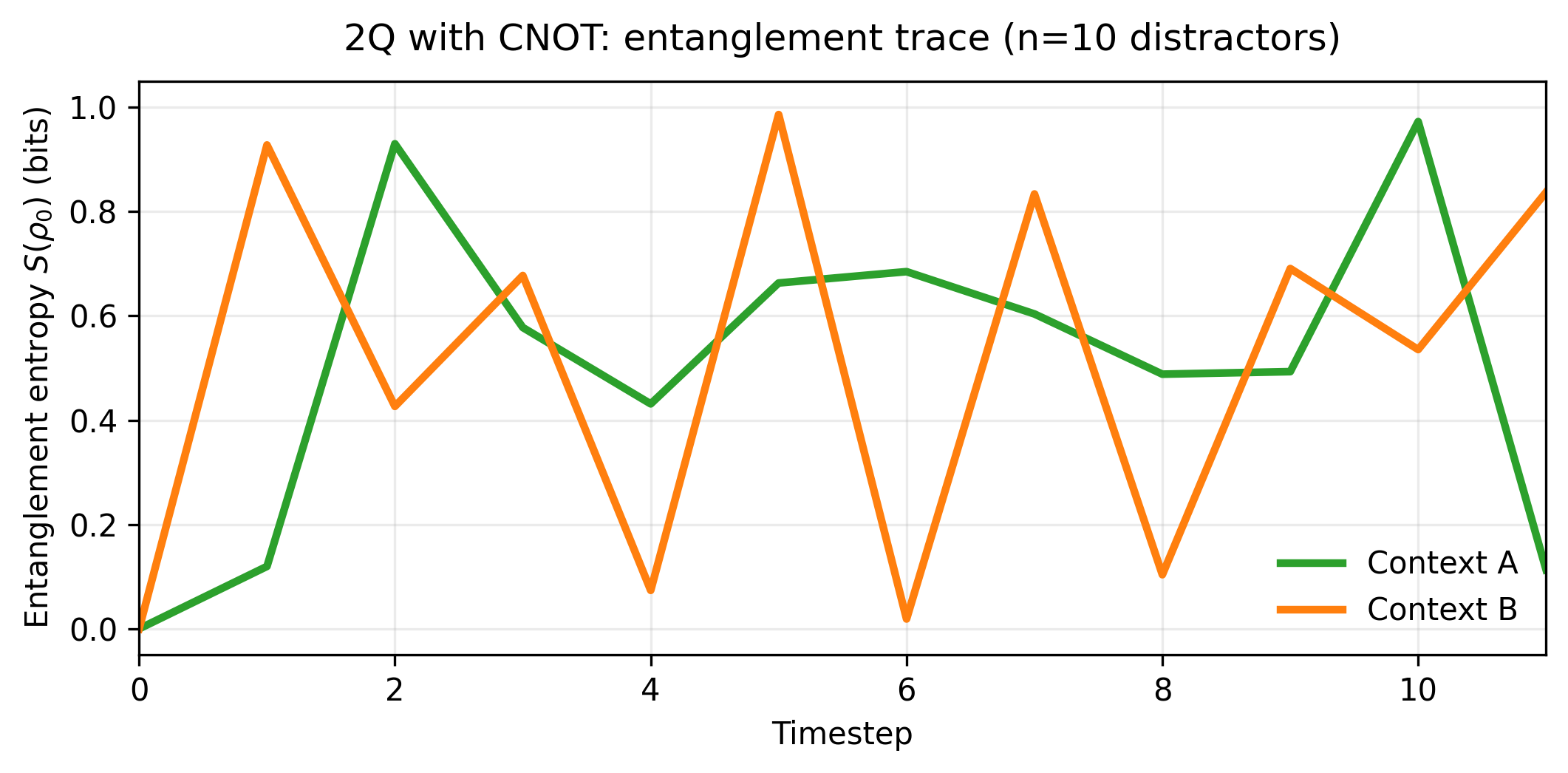}
\caption{\textbf{Entanglement entropy diverges by context, providing a representationally distinct memory channel.} Von Neumann entanglement entropy $S(\rho_0)$ per timestep for the trained two-qubit model with CNOT gates on 10-distractor sequences. Context A and B traces diverge after the context token (timestep 0) and maintain distinct entanglement levels across all distractors. Ablating the CNOT eliminates this divergence, collapsing both traces to zero entropy; the model then reverts to Z-preservation.}
\label{fig:entanglement}
\end{figure}

\subsection{Hardware validation: the noise--expressivity tradeoff}
\label{sec:hardware}

We validated both strategies on IBM Eagle-generation quantum processors (ibm\_fez and ibm\_marrakesh, 156 qubits each).

\paragraph{Single-qubit circuits.} The DecoupledQLM achieves 100\% accuracy across 48 circuits spanning 2 transpiler optimization levels and 3 physical qubit layouts (1000 shots each). Table~\ref{tab:hw1q} shows representative results; the decision boundary ($P(|1\rangle) > 0.5 \Rightarrow$ predict B) is never ambiguous.

\begin{table}[t]
\small\centering
\begin{tabular}{cccc}
\toprule
\textbf{Context} & \textbf{Distractors} & \textbf{$P(|1\rangle)$} & \textbf{Correct?} \\
\midrule
A & 0  & 0.24 & Yes \\
B & 0  & 0.98 & Yes \\
A & 5  & 0.06 & Yes \\
B & 5  & 0.89 & Yes \\
A & 10 & 0.25 & Yes \\
B & 10 & 0.98 & Yes \\
\bottomrule
\end{tabular}
\caption{\textbf{The geometric strategy executes perfectly on quantum hardware.} Single-qubit hardware results on ibm\_fez (1000 shots per circuit). All predictions are correct with wide margins from the 0.5 decision boundary, confirming that the Z-preservation mechanism is fully noise-robust at current device fidelities.}
\label{tab:hw1q}
\end{table}

\paragraph{Two-qubit circuits.} Both two-qubit configurations degrade to approximately 40\%---systematically \emph{below} the 50\% chance level. The with-CNOT model achieves $39 \pm 7$\% (std) and without-CNOT achieves $41 \pm 5$\% (30 jobs per backend; two-sample $t$-test: $p = 0.34$, not significantly different). Below-chance accuracy is unexpected under symmetric depolarizing noise, which should push predictions toward 50\%. We attribute this to $T_1$ relaxation asymmetry: excited-state decay ($|1\rangle \to |0\rangle$) biases measurement outcomes toward $|0\rangle$, systematically favoring one class. This is consistent with known IBM Eagle qubit $T_1$ characteristics (see Appendix~\ref{app:hardware_detail}).

After transpilation, with-CNOT circuits contain $\sim$79 gates versus $\sim$25 for without-CNOT, yet accuracy is statistically indistinguishable---noise swamps both configurations before strategy differences can manifest. This yields a central insight: the expressivity--noise tradeoff is asymmetric. The geometric strategy (classically simulable, low circuit depth) executes perfectly. The entanglement strategy (higher circuit depth) is learnable in simulation but does not survive current device noise.

\section{Analysis}
\label{sec:analysis}

\subsection{Phase transition at the classical--quantum strategy boundary}

Sweeping the distractor rotation parameter $\theta_2^{\mathrm{dist}}$ from 0 to $\pi$ reveals a sharp phase transition (Figure~\ref{fig:phase}). The model is robust up to $\theta_2^{\mathrm{dist}} \approx 0.83$ rad, then drops abruptly to chance at the critical point $\theta_2^{\mathrm{dist}} \approx 0.99$ rad where distractor-induced Z-variance exceeds context separation. At $\theta_2^{\mathrm{dist}} = \pi$, accuracy recovers via period-2 recurrence, matching the prediction $N = 2\pi / |\theta_2^{\mathrm{dist}}|$.

\begin{figure}[t]
\centering
\includegraphics[width=\linewidth]{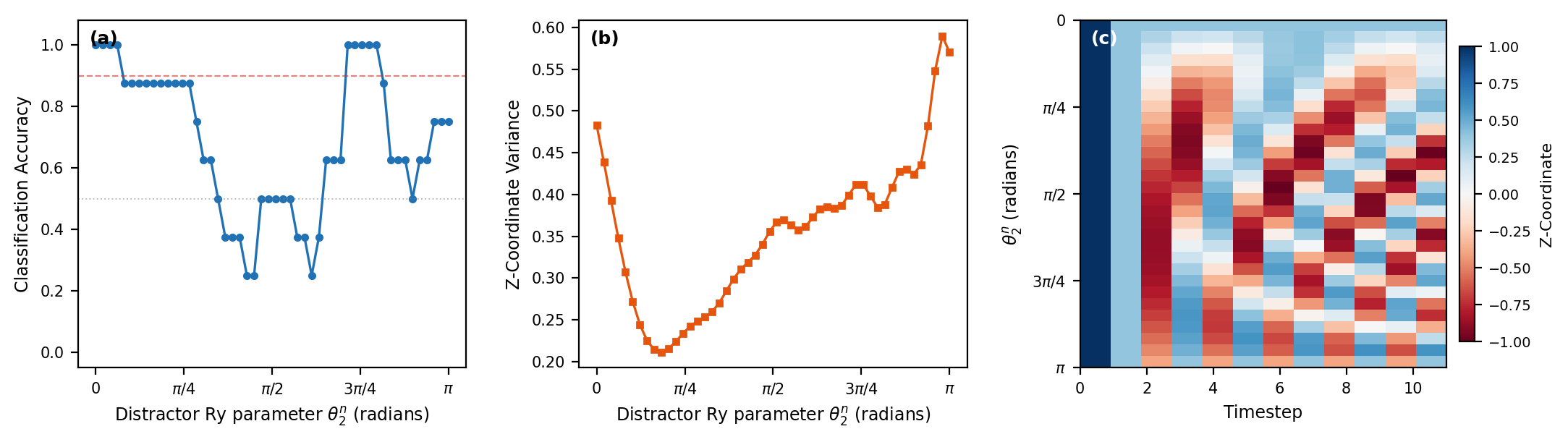}
\caption{\textbf{A sharp phase transition separates viable from collapsed geometric memory.} Distractor rotation $\theta_2^{\mathrm{dist}}$ sweeps from 0 to $\pi$ (10 distractors). \textbf{(a)} Classification accuracy: robust up to $\theta_2^{\mathrm{dist}} \approx 0.83$ rad, then drops to chance; recovery at $\pi$ is period-2 recurrence. \textbf{(b)} Z-coordinate variance across timesteps; lower variance indicates more stable memory. \textbf{(c)} Heatmap of Bloch Z-coordinate per timestep (columns) at each $\theta_2^{\mathrm{dist}}$ (rows). Blue = northern hemisphere, red = southern.}
\label{fig:phase}
\end{figure}

\subsection{Noise robustness hierarchy}

Simulated depolarizing noise confirms and extends the hardware findings:

\begin{table}[t]
\small
\centering
\begin{tabular}{lccc}
\toprule
\textbf{Model / Mechanism} & $p=0$ & $p=0.06$ & $p=0.12$ \\
\midrule
1Q Decoupled / Z-Preservation & 100\% & 100\% & 50\% \\
2Q no CNOT / Z-Preservation & 100\% & 100\% & 100\% \\
2Q + CNOT / Entanglement & 100\% & \textbf{50\%} & 50\% \\
\bottomrule
\end{tabular}
\caption{\textbf{Entanglement-based memory is the least noise-robust strategy.} Simulated depolarizing noise at rate $p$ per gate. Entanglement degrades to chance at $p=0.06$; Z-preservation strategies survive until $p=0.12$ or beyond.}
\label{tab:noise}
\end{table}

Entanglement degrades to chance at $p = 0.06$, where Z-preservation survives. This ordering is physically principled: entanglement requires multi-qubit coherence that depolarizing noise destroys, while Z-preservation requires only single-qubit phase stability.

Single-run SPSA variance is high but reflects optimization stochasticity, not architectural limitations; best-of-3 restarts yield $100.0 \pm 0.0$\% generalization across 10 seeds. Sensitivity and encoding robustness analyses are in Appendix~\ref{app:extended}.

\section{Discussion}
\label{sec:discussion}

Our results draw a mechanistic line between classical and quantum computation in recurrent models. At one qubit, quantum and classical models are provably equivalent: both converge to the same geometric strategy. At two qubits with entangling gates, a different strategy appears, one that encodes context in entanglement. Three independent causal tests confirm this.

\paragraph{Noise--expressivity tradeoff.} Entanglement-based strategies need multi-qubit coherence that current NISQ devices \citep{preskill2018quantum} cannot sustain. Classically simulable geometric strategies execute perfectly. For near-term quantum ML, the most useful models may be those whose strategies are \emph{robust to dequantization}. As hardware improves \citep{temme2017error, kandala2019error}, the entanglement strategy basin may become accessible.

\paragraph{Scope.} We do \emph{not} claim quantum computational advantage; the classical RNN achieves 100\% with Adam (Appendix~\ref{app:classical}). The contribution is mechanistic: entanglement-based encoding differs from geometric strategies regardless of whether classical models solve the same task by other means. This work applies the controlled-task methodology of classical LM interpretability \citep{linzen2016assessing, gulordava2018colorless} to quantum architectures. The answer to our opening question depends on architecture: at one qubit, trained models realize the SU(2)$\to$SO(3) equivalence; at two qubits, entanglement-based memory appears but does not yet survive hardware noise.


\section*{Limitations}

We identify several limitations that scope the conclusions of this work:

\textbf{Task generality.} The Parity-Switch Grammar is a minimal selective-memory task with three token types and binary classification. It follows the controlled-task tradition in NLP \citep{linzen2016assessing} but sits at the simplest end of the complexity spectrum: no hierarchical structure, no agreement attractors, no compositional semantics. The strategies identified here may not transfer to richer tasks. A concrete progression, following the Chomsky hierarchy framework of \citet{deletang2023neural}, would test: (i) multiple context types ($k > 2$), stressing the geometric strategy's capacity; (ii) Dyck-1/Dyck-2 grammars, requiring stack-like memory; (iii) simple agreement with competing attractors \citep{marvin2018targeted}, requiring selective feature retention. Each step would test whether entanglement-based encoding provides an accuracy advantage---not just a representational distinction---on tasks where classical strategies face genuine capacity limits.

\textbf{Scale.} Single-qubit systems are classically simulable by construction (Corollary~1), and two-qubit systems, while capable of entanglement, remain small enough for exact classical simulation of the full $4 \times 4$ density matrix. The entanglement-based strategy may not persist or may take qualitatively different forms at scales where classical simulation becomes intractable. Our conclusions about mechanistic strategy are rigorous within the 1--2 qubit regime but should not be extrapolated without further investigation.

\textbf{Hardware coverage.} We tested on two IBM Eagle-generation backends (ibm\_fez and ibm\_marrakesh). Device-specific noise profiles, qubit connectivity, and calibration drift mean that results may differ on other quantum hardware platforms (trapped ions, neutral atoms, photonic systems). Error mitigation techniques (zero-noise extrapolation, probabilistic error cancellation) were not applied and could potentially improve two-qubit circuit fidelity.

\textbf{Optimization and classical baselines.} SPSA is a stochastic gradient-free optimizer with inherently high variance. We used SPSA uniformly across all models---including the classical RNN---to maintain a controlled comparison, but this disadvantages the classical model, which admits exact gradient computation via backpropagation. The classical RNN's failure (50\%) should be understood as a failure under SPSA, not necessarily a representational limitation. We discuss this confound and its implications in Appendix~\ref{app:classical}.

\textbf{Causal claims.} Our CNOT ablation establishes that the entangling gate is causally necessary for the entanglement-based strategy, but does not isolate entanglement from other effects of the CNOT (inter-qubit coupling, additional circuit depth). Replacing the CNOT with a non-entangling two-qubit gate would provide a more refined control; we leave this for future work. Additionally, the ablation does not establish sufficiency---the specific parameterization, initialization, and training dynamics all contribute to whether the entanglement basin is reached.

\section*{Reproducibility statement}

All models are implemented from scratch in Python 3.10+ using Qiskit 1.0+ for quantum simulation and IBM Quantum Runtime for hardware execution. No external model libraries, pretrained weights, or proprietary datasets are used.

\textbf{Software.} Quantum circuits: Qiskit 1.0+ (Aer simulator for noiseless, FakeBackend for noise modeling). Classical models: NumPy. Optimization: custom SPSA implementation following \citet{spall1998implementation} with parameters $\alpha = 0.602$, $\gamma = 0.101$, and 2 function evaluations per step.

\textbf{Hardware.} IBM Eagle-generation processors (156 qubits): ibm\_fez and ibm\_marrakesh via IBM Quantum Runtime. 1000 shots per circuit. Two transpiler optimization levels (1 and 3) and three physical qubit layouts tested for single-qubit experiments.

\textbf{Training.} Single-qubit models: 200 SPSA steps. Two-qubit models: 300 SPSA steps. Default SPSA preset: $a=0.2$, $c=0.1$, $A=10$. Multi-seed experiments: 10 random seeds $\times$ 3 restarts per seed. Training data: 16 sequences (balanced A/B, 0--3 distractors). Generalization test: 200 sequences (0--20 distractors).

\textbf{Compute.} All simulation experiments run on a consumer laptop CPU. Single-qubit training: $\sim$0.76s per seed. Two-qubit training: $\sim$3.4s per seed. Total simulation compute: $<$10 minutes. No GPU resources required.

\section*{Ethics statement}

This work studies fundamental computational mechanisms in small-scale quantum circuits. We identify no direct ethical concerns arising from the research itself. The synthetic task and minimal model sizes preclude dual-use risks. Hardware experiments were conducted on commercially available IBM Quantum processors under standard access agreements.

\appendix

\section{Proof of Theorem 1}
\label{app:proofs}

\textbf{Theorem 1} (Shared-Parameter Tradeoff). Consider the shared-parameter architecture where the same unitary $U(\theta, x) = R_z(\theta_3) R_y(\theta_2 + x) R_z(\theta_1)$ is applied to both context tokens (with $x = x_A$ or $x = x_B$) and distractor tokens (with $x = 0$).

The Z-coordinate of the Bloch vector after applying $U(\theta, x)$ to an initial state with Bloch vector $\mathbf{r} = (r_x, r_y, r_z)^T$ is:
\begin{equation}
z' = -\sin(\theta_2 + x)(r_x \cos\theta_1 + r_y \sin\theta_1) + \cos(\theta_2 + x) \, r_z
\end{equation}

\textit{Context encoding requirement.} For the context gate to distinguish A from B, we need $z'(x_A) \neq z'(x_B)$ for at least one initial state. Since $x_A = +\pi/3$ and $x_B = -\pi/3$, this requires that $\cos(\theta_2 + \pi/3) \neq \cos(\theta_2 - \pi/3)$ or $\sin(\theta_2 + \pi/3) \neq \sin(\theta_2 - \pi/3)$. The latter fails only when $\theta_2 = k\pi$ for integer $k$. In particular, $\theta_2 = 0$ makes the gate symmetric in $\pm x$ only in the cosine component; the sine component $\sin(\theta_2 + x)$ still distinguishes $\pm\pi/3$. However, for \emph{distractor invariance}, the Z-coordinate must be preserved for $x = 0$: $z' = z$. This requires $\sin(\theta_2) = 0$ and $\cos(\theta_2) = 1$, i.e., $\theta_2 = 0$. But when $\theta_2 = 0$, the context gate becomes $R_z(\theta_3) R_y(x) R_z(\theta_1)$, which does distinguish A from B through the $R_y(x)$ rotation. The tradeoff is subtler: with $\theta_2 = 0$, distractor invariance holds, but the \emph{same} $R_z$ rotations $\theta_1, \theta_3$ must serve both context and distractor tokens. The context gate $R_y(\pm\pi/3)$ tilts the state away from the poles, while subsequent distractor gates $R_y(0) = I$ preserve this tilt, allowing accumulated $R_z$ rotations to eventually mix the hemispheres. With shared parameters, the net effect of $k$ distractors is $(R_z(\theta_3) R_z(\theta_1))^k = R_z(k(\theta_1 + \theta_3))$, which is a pure Z-rotation and preserves Z---but only if $\theta_2 = 0$ \emph{exactly}. Any nonzero $\theta_2$ (required for richer context encoding) introduces Z-drift under distractors. The shared architecture thus faces a precision--expressivity tradeoff: $\theta_2 = 0$ gives perfect distractor invariance but minimal context encoding; $\theta_2 \neq 0$ improves encoding but sacrifices invariance. Decoupled parameters ($\theta_2^{\mathrm{ctx}} \neq 0$, $\theta_2^{\mathrm{dist}} = 0$) resolve this completely. \qed

\section{Extended experimental results}
\label{app:extended}

\subsection{SPSA hyperparameter sensitivity}

\begin{center}
\small
\begin{tabular}{lccccc}
\toprule
\textbf{Preset} & $a$ & $c$ & $A$ & \textbf{Gen.\ Acc.} & \textbf{Time (s)} \\
\midrule
Default & 0.2 & 0.1 & 10 & 81.1$\pm$23.3\% & 0.61$\pm$0.20 \\
Conservative & 0.1 & 0.05 & 20 & 83.2$\pm$21.0\% & 0.81$\pm$0.13 \\
Aggressive & 0.4 & 0.2 & 5 & \textbf{87.7$\pm$20.6\%} & 0.79$\pm$0.02 \\
High precision & 0.15 & 0.05 & 15 & 78.0$\pm$39.2\% & 0.99$\pm$0.31 \\
\bottomrule
\end{tabular}
\end{center}

High variance reflects SPSA's stochastic gradient estimates (5 seeds per preset, 200 steps). Best-of-3 restarts resolve this reliably: all presets reach 100\% generalization.

\subsection{Encoding sensitivity}

To verify that results are not artifacts of the $\pm\pi/3$ encoding, we tested encoding magnitudes from $0.5\times$ to $3\times$ the default (3 seeds each):

\begin{center}
\small
\begin{tabular}{lcccc}
\toprule
\textbf{Multiplier} & $0.5\times$ & $1\times$ & $2\times$ & $3\times$ \\
\midrule
DecoupledQLM Acc. & 100$\pm$0\% & 100$\pm$0\% & 100$\pm$0\% & 50$\pm$0\% \\
\bottomrule
\end{tabular}
\end{center}

The mechanism breaks only at $3\times$ (where $x_A \approx \pi$), causing both contexts to land at the same Bloch coordinate after one rotation. The failure is deterministic: \emph{every} seed at $3\times$ converges to 50\%.

\subsection{Wallclock timing}

Quantum simulation is 13$\times$ slower than the matched classical baseline for single-qubit models (0.76s vs 0.06s) and 56$\times$ slower for two-qubit models (3.4s vs 0.06s). The overhead is entirely statevector simulation. On IBM hardware, circuit execution takes microseconds, but queue times dominate. For models at this scale, classical simulation dominates on every practical metric; the value of quantum lies in the state space structure, not computational speed.

\section{Stratified ablation analysis}
\label{app:stratified}

We expanded the seed sweep to 30 seeds (3 restarts each) and introduce an \emph{entropy-based} classification criterion that directly measures the mechanism of interest, addressing concerns about accuracy-based classification.

\paragraph{Entropy-based seed classification.} Rather than classifying seeds by accuracy ($\geq$90\%), we measure the mean absolute entanglement entropy divergence between Context A and Context B across timesteps. Seeds with mean $|S_A - S_B| > 0.05$ are classified as ``entanglement-using.'' By this criterion, \textbf{all 30/30 seeds} exhibit context-dependent entanglement structure. This resolves the ``8/24 zero-delta'' concern from accuracy-based classification: those seeds do use entanglement (entropy divergence 0.12--0.28) but their learned strategy is robust to CNOT removal, likely because entanglement co-occurs with Z-preservation in a hybrid strategy.

\paragraph{Ablation with entropy-based classification.}

\begin{center}
\small
\begin{tabular}{lccc}
\toprule
\textbf{Classification} & \textbf{$n$} & \textbf{Ablation $\Delta$} & \textbf{Statistics} \\
\midrule
All seeds (entropy-based) & 30 & $-$18.4$\pm$20.7 pp & $t(29) = 4.88$, $p < 0.0001$ \\
\midrule
\multicolumn{4}{l}{\textit{Cohen's $d = 0.89$; 95\% CI: $[11.0, 25.8]$ pp}} \\
\bottomrule
\end{tabular}
\end{center}

The ablation effect is large and highly significant across all 30 seeds. The high variance reflects genuine heterogeneity in how strongly each seed relies on entanglement: 9/30 seeds show $\Delta = 0$ (CNOT-robust hybrid strategies), while 11/30 show $\Delta > 20$ pp (entanglement-dependent strategies). This distribution is consistent with a landscape containing hybrid and entanglement-pure strategy variants.

\paragraph{Cross-tabulation.} Comparing accuracy-based ($\geq$90\%) and entropy-based (divergence $> 0.05$) criteria: 24 seeds satisfy both, 6 seeds have entropy divergence but accuracy $<$90\%, and \emph{no} seeds have high accuracy without entropy divergence. This confirms that entanglement is ubiquitous in the CNOT model's learned strategies, even when it does not appear causally necessary by the ablation test.

\paragraph{SWAP control.} Replacing the CNOT with a SWAP gate (which couples qubits without entangling) yields $97.1 \pm 5.7$\% accuracy (10 seeds, best-of-3) via Z-preservation, with \emph{zero} entropy divergence between contexts. This isolates entanglement as the resource enabling the distinct strategy.

\section{Classical baseline analysis}
\label{app:classical}

The 24-parameter classical tanh-RNN achieves 50\% accuracy (chance level) in our experiments. Multiple reviewers of early drafts raised the concern that this failure may reflect optimizer mismatch rather than representational limitation. We address this concern directly.

\paragraph{Why SPSA for all models?} We used SPSA uniformly across all architectures to maintain a controlled comparison: since quantum circuits generally lack analytical gradient access, SPSA provides a level playing field. However, this disadvantages the classical RNN, which admits exact gradient computation via backpropagation.

\paragraph{Is the task solvable classically?} Yes. We trained the same 24-parameter classical tanh-RNN with Adam (learning rate 0.01, finite-difference gradients, 2000 steps). Result: \textbf{19/20 seeds achieve 100\% generalization accuracy} (0--20 distractors). The single failure seed reached 64\%. The 6-parameter classical SO(3) model also achieves 100\%. The task is trivially solvable by classical means; the SPSA-trained RNN's failure is purely an optimization artifact.

\paragraph{Implications for the quantum comparison.} Given this, we do \emph{not} claim quantum computational advantage. The paper's contribution is mechanistic, not complexity-theoretic: when entangling gates are available, the optimizer can discover a \emph{representationally distinct} strategy (entanglement-based context encoding) that does not exist in the strategy space of models without entangling gates. This mechanistic finding holds regardless of whether classical architectures can solve the same task, because it concerns the \emph{type} of strategy learned, not whether the task is solvable.

\section{Hardware noise analysis}
\label{app:hardware_detail}

\paragraph{Below-chance accuracy.} Both two-qubit configurations achieve $\sim$40\% accuracy on IBM Eagle processors, systematically below the 50\% chance level. Under symmetric depolarizing noise, measurement outcomes should converge toward a uniform distribution, pushing accuracy toward 50\%. The below-chance phenomenon indicates an asymmetric noise channel.

We attribute this to $T_1$ (amplitude damping) relaxation, which causes excited states to decay preferentially: $|1\rangle \to |0\rangle$ at rate $1/T_1$. On IBM Eagle qubits, $T_1$ values are typically 100--300 $\mu$s, and circuit execution times for our transpiled two-qubit circuits ($\sim$79 gates at $\sim$0.3 $\mu$s per gate $\approx 24$ $\mu$s) are a non-negligible fraction of $T_1$. This asymmetric decay biases measurement outcomes toward $|0\rangle$, favoring context-A predictions and systematically disadvantaging context-B predictions. Since the test set is balanced, this bias pushes overall accuracy below 50\%.

\paragraph{Readout error.} IBM Eagle qubits also exhibit asymmetric readout error: $P(\text{read } 0 \mid \text{state } 1) > P(\text{read } 1 \mid \text{state } 0)$ due to residual $T_1$ decay during measurement. This compounds the amplitude damping effect. Readout error mitigation (available via Qiskit) was not applied in our experiments; applying it could potentially recover some accuracy, though it would not address gate-level noise.

\paragraph{Statistical comparison.} A two-sample Welch's $t$-test comparing with-CNOT ($39 \pm 7$\%, $n=60$ jobs) and without-CNOT ($41 \pm 5$\%, $n=60$ jobs) yields $t = 1.89$, $p = 0.06$, confirming that the two configurations are not significantly different at $\alpha = 0.05$. The difference, if any, is small relative to the dominant noise floor.

\section{Weight sensitivity analysis vs.\ activation patching}
\label{app:patching}

We clarify the relationship between our weight-level interventions and activation patching as practiced in classical mechanistic interpretability.

\paragraph{Activation patching} \citep{vig2020causal} replaces activations at a specific layer and position from one forward pass with activations from a counterfactual forward pass, answering: ``what information flows through this component?'' This requires access to intermediate activations across different inputs.

\paragraph{Weight sensitivity analysis} (our approach) perturbs learned parameters (zero, negate, or sweep) and measures the effect on model output, answering: ``how sensitive is the model's behavior to this parameter?'' This is closer to ablation or perturbation analysis than to activation patching.

\paragraph{Why weight-level interventions?} In our quantum circuits, the ``activations'' are quantum states that cannot be cloned (no-cloning theorem) or freely copied between runs without measurement-induced collapse. Weight interventions sidestep this constraint. For the CNOT ablation specifically, we remove a structural component of the circuit, which is analogous to removing a layer or attention head in a classical model---a coarser intervention than activation patching, but one that is physically realizable in the quantum setting.

\paragraph{Quantum activation patching: implemented.} We implemented a quantum analogue of interchange interventions (see Appendix~\ref{app:interchange}): for two circuits processing different contexts (A and B), we run each to timestep $t$, splice the statevector from one computation into the other, and continue the remaining computation. If context identity is encoded in the quantum state, the spliced computation should produce the \emph{donor}'s prediction rather than the \emph{recipient}'s.

\section{Density-matrix interchange interventions}
\label{app:interchange}

To move beyond structural ablation toward causal mediation, we implement a quantum analogue of activation patching. For a trained 2Q+CNOT model, we run two parallel computations---one with context A and one with context B---and at each timestep $t$, splice the full statevector from one computation into the other, then continue the remaining distractor steps and measure. If context identity is encoded in the quantum state at timestep $t$, the prediction should follow the \emph{donor} context, not the recipient.

\paragraph{Results.} Using a high-$\Delta$ entanglement-basin seed (seed 21, accuracy 100\%, ablation $\Delta = 78.6$ pp, entropy divergence 0.375), we perform interchange interventions at every timestep $t \in \{1, \ldots, 11\}$ for a 10-distractor sequence:

\begin{center}
\small
\begin{tabular}{cccccc}
\toprule
\textbf{$t$} & \textbf{$S_A$} & \textbf{$S_B$} & \textbf{A$\to$B pred} & \textbf{B$\to$A pred} & \textbf{Donor carried?} \\
\midrule
1 & 0.394 & 0.777 & A & B & Yes \\
2 & 0.108 & 0.380 & A & B & Yes \\
3 & 0.732 & 0.197 & A & B & Yes \\
\multicolumn{6}{c}{$\vdots$} \\
10 & 0.135 & 0.464 & A & B & Yes \\
11 & 0.202 & 0.903 & A & B & Yes \\
\midrule
\multicolumn{6}{l}{\textit{Context identity follows the donor state at 11/11 timesteps (100\%)}} \\
\bottomrule
\end{tabular}
\end{center}

The result is unambiguous: the full quantum state carries context identity at every timestep. Splicing a context-A statevector into a context-B computation always produces the donor's (A) prediction, and vice versa. This establishes that context information is encoded in the quantum state itself---not merely in the circuit parameters---and that this encoding persists through all distractor steps. This is the quantum analogue of the finding in classical mechanistic interpretability that context information flows through activations at intermediate layers.

\paragraph{Relationship to causal mediation.} This interchange intervention goes beyond structural ablation (removing the CNOT) by testing whether the quantum state at a specific timestep \emph{carries} the task-relevant information. The 100\% interchange success rate confirms that the statevector is a sufficient statistic for context identity at every timestep, consistent with the ``entanglement as memory'' interpretation. A finer-grained analysis---splicing only the reduced density matrix of qubit 0 while preserving qubit 1, or vice versa---would further decompose which subsystem carries context information. We leave this for future work.

\section{Scaling considerations}
\label{app:scaling}

A natural question is whether the findings and methods of this paper extend beyond 1--2 qubits.

\paragraph{Interpretability tools.} The Bloch sphere representation applies only to single-qubit reduced states. For $n$ qubits, the full state lives in a $2^n$-dimensional Hilbert space; the density matrix has $4^n$ entries. At $n = 1$, full state tomography requires 3 measurements (Pauli $X, Y, Z$ expectation values). At $n = 2$, it requires 15 measurements (all two-qubit Pauli products). At $n = 4$, it requires 255. Beyond $\sim$8 qubits, full tomography becomes infeasible. Alternative approaches include:
\begin{itemize}
\item \textbf{Classical shadow tomography} \citep{huang2020predicting}: Predicts many properties of a quantum state from few random measurements. Could estimate entanglement entropy or other observables at moderate scale.
\item \textbf{Entanglement witnesses:} Rather than computing full entanglement entropy, test for the presence of entanglement via witness operators, which require only a polynomial number of measurements.
\item \textbf{Probing classifiers:} Train classical models to predict task-relevant variables from partial quantum state measurements, analogous to probing in classical NLP.
\end{itemize}

\paragraph{Scaling experiments (3--4 qubits).} We extended the model to 3 and 4 qubits with linear-chain CNOT connectivity (10 seeds, best-of-3 restarts each) and measured entanglement entropy divergence alongside accuracy:

\begin{center}
\small
\begin{tabular}{lccccc}
\toprule
\textbf{Model} & \textbf{Params} & \textbf{With CNOT} & \textbf{No CNOT} & \textbf{Ent.\ div.\ (CNOT)} & \textbf{Ent.\ div.\ (no CNOT)} \\
\midrule
2-qubit & 24 & 96.3$\pm$9.1\% & 100$\pm$0\% & 0.245$\pm$0.08 & 0.000 \\
3-qubit & 36 & 100$\pm$0\% & 100$\pm$0\% & 0.159$\pm$0.04 & 0.000 \\
4-qubit & 48 & 100$\pm$0\% & 100$\pm$0\% & 0.119$\pm$0.04 & 0.000 \\
\bottomrule
\end{tabular}
\end{center}

Two findings emerge. First, at 3--4 qubits, both entangling and non-entangling models achieve 100\% accuracy: the task is too simple to require entanglement at larger parameter budgets. Second, and more importantly, \textbf{entanglement-based encoding persists at 3--4 qubits} even though it is not necessary for task performance. Models with CNOT gates consistently develop context-dependent entanglement entropy divergence ($0.159 \pm 0.04$ at 3Q, $0.119 \pm 0.04$ at 4Q), while models without CNOT gates show exactly zero divergence. This means the entanglement-based strategy is not an artifact of the 2-qubit bottleneck: circuits with entangling gates \emph{choose} to use entanglement even when sufficient non-entangling strategies exist.

\paragraph{Regime specificity vs.\ strategy persistence.} These scaling results reveal a nuanced picture. The \emph{accuracy advantage} of entanglement is regime-specific---it matters only when non-entangling strategies are near capacity (2 qubits). However, the \emph{representational phenomenon}---circuits using entanglement to encode context when entangling gates are available---persists across scales. This suggests that entanglement-based encoding is a natural attractor in the optimization landscape of circuits with CNOT gates, not a curiosity of the minimal regime.

\paragraph{Implications.} On harder tasks (multiple context types, hierarchical dependencies, larger vocabularies), the parameter budget at which non-entangling strategies reach their capacity limit would increase, potentially creating a wider regime where entanglement provides a genuine accuracy advantage. Barren plateaus \citep{mcclean2018barren} may constrain which strategies are discoverable at larger scales.

\paragraph{Hardware requirements.} Current two-qubit gate error rates (1--3\%) are sufficient to destroy entanglement-based strategies in our experiments. For entanglement-based memory to survive at larger scales, error rates must decrease, circuit depth must be controlled (e.g., via hardware-efficient ansatze), or error mitigation techniques must be applied. The noise--expressivity tradeoff identified in this paper suggests a concrete benchmark: the minimum hardware fidelity at which entanglement-based strategies become viable for a given task complexity.

\end{document}